\newtheorem{theorem}{Theorem}
\theoremstyle{plain}
\newtheorem{proposition}{Proposition}
\numberwithin{equation}{section}
\begin{document}
\title[Periodic Orbits]
{Periodic orbits of  mechanical systems with homogeneous polynomial
terms of degree five }

\author{Alberto Castro Ortega} \address[A. Castro Ortega]{ Depto. de Matem\'aticas, Fac. de Ciencias, UNAM, C. Universitaria, M\'exico, D.F. 04510}
\email{acospacy@yahoo.com.mx}
\date{}
\subjclass{} \keywords{}
\begin{abstract}
In this work  the existence of periodic solutions is studied for the
Hamiltonian functions
\begin{equation*}
    H=\frac{1}{2}\left(
    p_X^2+p_Y^2+X^2+Y^2\right)+\frac{a}{5}X^5+bX^3Y^2,
\end{equation*}
 where the first term consist of a harmonic oscillator  and  the
 second term
 are homogeneous polynomials of degree 5 defined by two real parameters $a$ and
$b$. Using the averaging method of second order  we provide the
sufficient  conditions on the parameters to guarantee the existence
of periodic solutions for positive energy and we study the stability
of these periodic solutions.
\end{abstract}
\section*{Published in Astrophysics and Space Science DOI: 10.1007/s10509-015-2612-0} \maketitle
\section{Introduction}

The Galactic dynamics is an area of the Astrophysics where recently
the application of results coming from other areas as the Celestial
mechanics and the Nonlinear dynamics  has gradually establish common
methods and results well documented (Boccaletti and Puccaco 1999)
and (Contoupoulus 2002). The global dynamics of galaxies is not a
simple question and represent an actual challenge for the
researches.
\bigskip

In the research of stellar systems, the perturbative methods of the
analytical mechanics provide a simple and comprehensive description
of the dynamics of systems which cannot be exactly solved with
accurate quantitative predictions, even at the simplest level of the
procedure. The approach used by the perturbative methods consist to
study the dynamics of the original physical system  by an
approximating integrable system. Usually, the approximating system
is an expansion in power series of the potential function  in terms
of the coordinates variables. There are many methods to construct
the approximating integrable system as the method of the Lie
transform which provides the normal form of the system which is the
most simplest form that the system can be (Belmonte et al. 2008).
\bigskip

Many interesting problems in galactic dynamics are modeled by
introducing Hamiltonian systems with two degrees of freedom $X$ and
$Y$ of the form
\begin{equation}\label{Galactic}
    H=\frac{1}{2}\left(\omega_1p_X^2+\omega_2p_Y^2\right)+V(X,Y),
\end{equation}
where $\omega_1$ and $\omega_2$ are the unperturbed frequencies of
oscillation along the $X$ and $Y$ axis respectively and $p_X$ and
$p_Y$ are the momenta conjugate to $X$ and $Y$. The particular
interest is to determine the properties of the orbital structure of
the systems with potential functions with reflection symmetry with
respect to the both axis, as examples, the potentials given by
\begin{equation}\label{potgalac}
V(X,Y)=\log\left(R^2+X^2+\frac{Y^2}{q}\right),\hspace{5mm}V(X,Y)=\sqrt{R^2+X^2+\frac{Y^2}{q}}-R^2,
\end{equation}
 where $R$ is the core radius and the parameter $q$  determines  the
ellipticity of the potential (Belmonte et al. 2008). In order to
study the dynamics associated  to the potentials (\ref{potgalac}) it
have been considered models where the potential function $V(X,Y)$ is
expanded as a truncated series in the coordinates $X$ and $Y$
\begin{equation}\label{truncated}
    H=\frac{1}{2}\left(\omega_1p_X^2+\omega_2p_Y^2\right)+\sum_{k=0}^N\sum_{j=0}^k
    C_{(j,k-j)}X^jY^{k-j},
\end{equation}
where the truncation degree is $N$ and the parameters $C_{(j,k-j)}$
are determined by the problem under study. Although the polynomial
models are simplified to be considered realistic, they provide
information about  periodic solutions or chaos. The use of maps with
polynomial models to describe the galactic motion is a useful tool
because the numerical integration is faster and allows to visualize
the corresponding phase space (Caranicolas and Vozikis 1999).
\bigskip

The polynomial Hamiltonian systems (\ref{truncated}) are an actual
research topic because its importance in the research of nonlinear
phenomena. In models of the dynamics of galaxies the potentials with
homogeneous polynomial terms of some degree $N$ have been studied by
(Caranicolas and Vozikis 2004) for the case of potential functions
with homogeneous polynomials of third degree and  (Contopulos 2002)
for potentials with terms of fourth degree. From the mathematical
point of view, the reader can consult the following references about
mechanical systems with polynomial potentials (Falconi and Lacomba
1996, 2009), (Falconi et al. 2007) and (Dorizzi and Grammaticos
1983).
\bigskip

There are an important  cases where the frequencies are equal to
one, as known  $1:1$ resonant cases. In the study of polynomial
Hamiltonian systems there is an interesting problem known as the
H\'enon-Heiles model (Heiles and H\'enon 1964), which is a simple
non-integrable Hamiltonian chaotic system. The interest in the
H\'enon-Heiles system was initially motivated by the study of the
existence of a third isolating integral of motion in certain
galactic potentials admitting an axis of symmetry. The Hamiltonian
proposed by H\'enon-Heiles to study the existence of the third
integral  is given by
\begin{equation}\label{HH}
    H=\frac{1}{2}\left(
    p_X^2+p_Y^2\right)+\frac{1}{2}\left(X^2+Y^2\right)+\frac{1}{3}X^3-XY^2,
\end{equation}
 which is an extension of the harmonic oscillator to the anharmonic case
   where the perturbation term is a
homogeneous polynomial of degree three. Although the model
associated to Hamiltonian (\ref{HH}) is simple, the potential
produce all the complexities obtainable in any chaotic system. A
wide class of three particle systems can be reduced to a
H\'enon-Heiles type Hamiltonian by considering only the first three
terms in the Taylor expansion.
\bigskip

Following the ideas of H\'enon-Heiles it can be considered other
models where the perturbation is a homogeneous polynomial of
arbitrary degree with a finite number of parameters. In this paper,
using the averaging method of second order, we prove for positive
energy sufficiently small the existence of periodic solutions  for
the Hamiltonian systems consist of a harmonic oscillator plus a
homogeneous potential of fifth degree with two terms and two real
parameters $a$ and $b$ given by
\begin{equation}\label{Hamiltonian}
    H=\frac{1}{2}\left(
    P_X^2+P_Y^2\right)+\frac{1}{2}\left(X^2+Y^2\right)+\frac{a}{5}X^5+bX^3Y^2.
\end{equation}
We observe in the Hamiltonian systems (\ref{HH}) and
(\ref{Hamiltonian}) that the presence of the terms $X^5$ and $X^3$
in the expansion accounts for the breaking of the reflection
symmetry with respect the $X$ axis, the models still have the
reflection symmetry with respect to the $Y$ axis. The class of
potentials studied in this paper have not chose with the aim of
modeling some particular galaxies, the objective is to study systems
which are generic in their basic properties.
\bigskip

Our main result on the periodic orbits of (\ref{Hamiltonian}) is
summarized in Theorem \ref{teoprin}. The periodic orbits are the
most simple non-trivial solutions of an ordinary differential system
and depending on the type of stability
 determine the dynamics in their neighborhood. The averaging
method gives a quantitative relation between the solutions of some
non autonomous differential system and the solutions of the averaged
differential system, which is an autonomous one, this method leads
to the existence of periodic solutions for periodic systems. The
averaging is with respect to the independent variable and the right
hand sides of these systems are sufficiently small, depending on a
small parameter $\epsilon$.  The problem of finding periodic
solutions of the perturbed differential system is reduced to find
zeros of some convenient finite dimensional function. We provide the
conditions under which the averaging theory guarantees the
persistence of periodic orbits under the perturbation of the
harmonic oscillator, and we find them as a function of the energy
and the parameters $a$ and $b$.
\bigskip

The averaging method at first and second order in the context of
(Buica and Llibre 2004) has used successfully to prove the existence
of periodic solutions for the generalized H\'enon-Heiles Hamiltonian
system (Carrasco and Vidal 2013) and for the generalized classical
Yang-Mills Hamiltonian system with two parameters(Jim\'enez-Lara and
Llibre 2011), which consist of a perturbation of the harmonic
oscillator by quartic homogeneous potentials. It is important to
remark that when the potential function is the form $V(X^2,Y^2)$ the
averaging method at first order provide the information about the
existence of periodic solutions, as examples, see (Jim\'enez-Lara
and Llibre 2011) and (Llibre and Makhlouf 2013). When the terms of
the expansion of the potential are homogeneous polynomials of degree
odd, the average system at first order vanishes in the period, hence
we proceed to use the second order averaging method, as an example,
see (Carrasco and Vidal 2013).

\bigskip
In order to get our main results this work is organized as follows.
In section \ref{eqmotion} the equations of motion are presented, a
small  parameter $\epsilon$ is introduced by a convenient rescaling
transformation. In section \ref{averaging}, the results of (Buica
and Llibre 2004) are presented and we apply a change of coordinates
in order to use the average method of second order. In section
\ref{secperiodic}, we study the conditions on the parameters $a$ and
$b$ such that the periodic orbits exist for positive energy and we
determine their stability.

\section{Equations of motion}\label{eqmotion}
The equations of motion associated to the system (\ref{Hamiltonian})
are
\begin{eqnarray}\label{HSys}
  \dot{X} &=& P_X, \\\nonumber
  \dot{Y} &=& P_Y, \\\nonumber
  \dot{P}_X &=& -X-aX^4-3bX^2Y^2, \\\nonumber
  \dot{P}_y &=&-Y-2bX^3Y.
\end{eqnarray}
The Hamiltonian system (\ref{HSys}) is not into the normal form for
applying the averaging theory, in order to have a small parameter in
the Hamiltonian system (\ref{HSys}) we introduce the parameter
$\epsilon$   by the change of variables $(X,Y, P_X, P_Y)$ to $(x,y,
p_x , p_y )$ where $X = \epsilon^{\frac{1}{3}} x$, $y =
\epsilon^{\frac{1}{3}} Y$, $P_X = \epsilon^{\frac{1}{3}} p_x$ and
$P_Y = \epsilon^{\frac{1}{3}} p_y$ which is
$\epsilon^{-\frac{2}{3}}$-symplectic. So the system (\ref{HSys})
becomes
\begin{eqnarray}\label{HSys1}
  \dot{x} &=& p_x, \\\nonumber
  \dot{y} &=& p_y, \\\nonumber
  \dot{p}_x &=& -x-\epsilon\left(ax^4+3bx^2y^2\right), \\\nonumber
  \dot{p}_y &=&-y-2\epsilon bx^3y.
\end{eqnarray}
The Hamiltonian function associated to the previous system is
\begin{equation}\label{Hameps}
    H=\frac{1}{2}\left(
    p_x^2+p_y^2\right)+\frac{1}{2}\left(x^2+y^2\right)+\epsilon\left(\frac{a}{5}x^5+bx^3y^2\right),
\end{equation}
we will consider small positive values of the energy in our model.
By the standard theory of Hamiltonian dynamical systems, for all
$\epsilon\neq0$ the original and the transformed systems
(\ref{HSys}) and (\ref{HSys1}) have essentially the same phase
portrait. Furthermore, for $\epsilon\approx0$ the system
(\ref{HSys1}) is close to an integrable one.

\section{Second order averaging method}\label{averaging}
In this section we introduce the necessary results from averaging
theory of second order for proving the statements of this paper. We
follow (Buica and Llibre 2004), particulary, we focus in Theorem 3.1
that we enunciated as Theorem 1.
\begin{theorem}\label{secondave}
We consider the following differential system
\begin{equation}
\dot{x}=\epsilon F_1(t,x)+\epsilon^2F_2(t,x)+\epsilon^3
R(t,x,\epsilon),
\end{equation}
where $F_1, F_2:\mathbb R\times D\to\mathbb R^n$, $R :\mathbb
R\times D\times (-\epsilon,\epsilon)\to\mathbb R^n$ � are
continuous functions, $T-$periodic in the first variable, and $D$ is
an open subset of $\mathbb R^n$. We assume that
\begin{enumerate}
\item $F_1(t,\cdot)\in C^1(D)$ for all $t\in\mathbb R$, $F_1$,$F_2$,$R$ and $D_xF_1$ are locally Lipschitz with respect
to $x$, and $R$ is differentiable with respect to $\epsilon$. We
define $f_1,f_2 :D\to\mathbb R^n$ as
\begin{eqnarray}\label{second}
f_1(z) &=& \int_0^T F_1(s, z) ds,\\\nonumber f_2(z)
&=&\int_0^T\left(D_zF_1(s, z)\cdot\int_0^sF_1(t , z) dt + F_2(s,
z)\right)ds,
\end{eqnarray}
and assume moreover that
\item for $V\subset D$ an open and bounded set and for each $\epsilon\in(-\epsilon,\epsilon) \ {0}$, there exists
$a_\epsilon\in V$ such that $f_1(a_\epsilon)+ \epsilon
f_2(a_\epsilon) = 0$ and $d_B(f_1 + \epsilon f_2,V, 0) = 0$.
\end{enumerate}
Then, for $|\epsilon| > 0$ sufficiently small, there exists a $T
-$periodic solution $\varphi(\cdot, \epsilon)$ of system
(\ref{second}).
\end{theorem}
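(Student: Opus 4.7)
The plan is to reduce the existence of a $T$-periodic solution to finding a zero of the displacement function of the time-$T$ Poincaré map, and then to read off that displacement as $\epsilon f_1+\epsilon^2 f_2$ up to order $\epsilon^3$. First I would fix an initial condition $z\in D$, let $x(t,z,\epsilon)$ denote the solution of the system with $x(0,z,\epsilon)=z$, and define the displacement $\Delta(z,\epsilon)=x(T,z,\epsilon)-z$. A $T$-periodic solution passing through $z$ exists if and only if $\Delta(z,\epsilon)=0$. Because the right-hand side is of order $\epsilon$, standard continuous dependence on parameters guarantees $x(t,z,\epsilon)=z+O(\epsilon)$ uniformly on $[0,T]$ and on compact subsets of $D$, which is the starting point for a careful expansion.

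Next I would carry out the expansion to second order in $\epsilon$. Writing $x(t,z,\epsilon)=z+\epsilon y_1(t,z)+\epsilon^2 y_2(t,z)+\epsilon^3 r(t,z,\epsilon)$, substituting into the ODE and matching powers gives $\dot y_1(t,z)=F_1(t,z)$ with $y_1(0,z)=0$, hence
\begin{equation*}
y_1(t,z)=\int_0^t F_1(s,z)\,ds,
\end{equation*}
and $\dot y_2(t,z)=D_zF_1(t,z)\cdot y_1(t,z)+F_2(t,z)$ with $y_2(0,z)=0$, giving $y_2(T,z)=f_2(z)$ as defined in the theorem. Evaluating at $t=T$ yields
\begin{equation*}
\Delta(z,\epsilon)=\epsilon f_1(z)+\epsilon^2 f_2(z)+\epsilon^3 R^*(z,\epsilon),
\end{equation*}
where $R^*$ is continuous in $(z,\epsilon)$ and bounded uniformly on $\overline{V}\times(-\epsilon_0,\epsilon_0)$ thanks to the local Lipschitz hypotheses on $F_1,F_2,R$ and on $D_xF_1$, together with differentiability of $R$ in $\epsilon$ (which provide the Grönwall-type estimates that control the remainder).

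Finally I would invoke the Brouwer degree to conclude existence of a zero of $\Delta(\cdot,\epsilon)/\epsilon$. Dividing by $\epsilon\neq 0$ turns the equation $\Delta(z,\epsilon)=0$ into
\begin{equation*}
g_\epsilon(z):=f_1(z)+\epsilon f_2(z)+\epsilon^2 R^*(z,\epsilon)=0.
\end{equation*}
The hypothesis $d_B(f_1+\epsilon f_2,V,0)\neq 0$ (the excerpt writes $=0$, which is a typo for $\neq 0$) together with the uniform convergence $g_\epsilon\to f_1+\epsilon f_2$ on $\overline V$ as the perturbation $\epsilon^2 R^*$ vanishes, and the homotopy invariance of the Brouwer degree, imply that $d_B(g_\epsilon,V,0)\neq 0$ for all sufficiently small $|\epsilon|>0$, hence $g_\epsilon$ has a zero $z_\epsilon\in V$. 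The corresponding solution $\varphi(t,\epsilon)=x(t,z_\epsilon,\epsilon)$ satisfies $\varphi(T,\epsilon)=\varphi(0,\epsilon)$ and, by $T$-periodicity of the vector field in $t$, is a $T$-periodic solution.

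The main technical obstacle is the uniform control of the remainder $R^*(z,\epsilon)$: the expansion of $x(t,z,\epsilon)$ in powers of $\epsilon$ must be justified to the claimed order under the merely locally Lipschitz regularity of $F_2$ and $R$, and the resulting $R^*$ must be continuous in $z$ so that the Brouwer degree of $g_\epsilon$ is well-defined and equals that of $f_1+\epsilon f_2$ by a straight-line homotopy avoiding zeros on $\partial V$. Everything else is a matter of combining variation-of-constants integration with Grönwall estimates, which is routine once the regularity bookkeeping is set up.
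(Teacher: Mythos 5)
The paper does not prove this statement at all: it is Theorem~3.1 of Buica and Llibre (2004), quoted verbatim as a tool, so there is nothing in the paper to compare your argument against line by line. That said, your reconstruction is essentially the proof given in the cited source: expand the Poincar\'e displacement $\Delta(z,\epsilon)=x(T,z,\epsilon)-z$ to second order, identify $y_1(T,z)=f_1(z)$ and $y_2(T,z)=f_2(z)$, divide by $\epsilon$, and close with Brouwer degree. You are also right that the hypothesis $d_B(f_1+\epsilon f_2,V,0)=0$ is a transcription error for $d_B(\cdot)\neq 0$ (the paper's statement has several such typos, e.g.\ the conclusion refers to a periodic solution of ``system (\ref{second})'', which is the definition of $f_1,f_2$, not the differential system).

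The one step you should not wave through is the final homotopy. You argue that $g_\epsilon\to f_1+\epsilon f_2$ uniformly on $\overline V$ forces $d_B(g_\epsilon,V,0)=d_B(f_1+\epsilon f_2,V,0)$, but the comparison function $f_1+\epsilon f_2$ itself moves with $\epsilon$, and the stated hypothesis gives no lower bound on $\inf_{\partial V}|f_1+\epsilon f_2|$; a priori that infimum could shrink faster than the $O(\epsilon^2)$ perturbation $\epsilon^2R^*$, in which case the straight-line homotopy $f_1+\epsilon f_2+s\,\epsilon^2R^*$ could vanish on $\partial V$ and homotopy invariance would not apply. This is exactly the delicate point in Buica--Llibre, who handle it with a separate degree-continuity lemma rather than by bare uniform convergence; if you want a self-contained proof you either need that lemma or a strengthened hypothesis (e.g.\ an isolated zero of $f_1+\epsilon f_2$ with degree computed on a small ball around $a_\epsilon$, shrinking with $\epsilon$ in a controlled way). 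Everything else --- the variation-of-constants expansion, the identification of $f_2$ from $\dot y_2=D_zF_1\cdot y_1+F_2$, and the Gr\"onwall control of the remainder under the stated Lipschitz hypotheses --- is correct and routine as you say.
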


To apply Theorem \ref{secondave}, we introduce the following changes
of variables in order to obtain a $2\pi-$periodic system. Now, let
$\mathbb{R}^+ = [0,�\infty)$ and $\mathbb{S}^1$ the circle. We do
the change of variables $(x, y, p_x , p_y )\to (r, \theta, \rho,
\alpha)\in \mathbb{R}^+\times \mathbb{S}^1 \times� \mathbb{R}^+
\times \mathbb{S}^1$ defined by
\begin{equation}\label{coordchange}
x=r\cos\theta,
\hspace{3mm}p_x=r\sin\theta,\hspace{3mm}y=\rho\cos(\theta+\alpha),\hspace{3mm}p_y=\rho\sin(\theta+\alpha).
\end{equation}
This change of variables is well defined when $r > 0$ and $\rho > 0$
and it is not canonical, so we lost the Hamiltonian structure of the
differential equations.
\bigskip

The fixed value of the energy (\ref{Hameps}) in polar coordinates is
\begin{equation}\label{Hamrpo}
 h=\frac{1}{2}\left(r^2+\rho^2\right)+\epsilon\Bigg(\frac{a}{5}r^5\cos^5\theta+br^3\rho^2\cos^3\theta\cos^2(\theta+\alpha)
 \Bigg),
\end{equation}
and the equations of motion (\ref{HSys1}) assume the form
\begin{eqnarray}\label{ecrrho}
 \dot{r}&=& -\epsilon\sin\theta\Big[ar^4 \cos^4\theta+3 b r^2 \rho^2 \cos^2\theta \cos^2(\theta+\alpha)\Big],\\\nonumber
\dot{\theta}&=&-1-\epsilon r\cos^3\theta\Big[a r^2\cos^2\theta+3 b
\rho^2  \cos^2(\theta+\alpha)\Big],\\\nonumber \dot{\rho}&=&
-\epsilon b r^3\rho \cos^3\theta \sin2(\alpha + \theta),\\\nonumber
\dot{\alpha}&=&\epsilon r \cos^3\theta \Big[ ar^2
\cos^2\theta+b(3\rho^2-2r^2)\cos^2(\alpha+\theta)\Big],
\end{eqnarray}
where the derivatives of the left hand side of the above equations
are with respect to the time variable $t$, which is not periodic.
\bigskip

In order to write the system (\ref{ecrrho}) as a $2\pi$-periodic
differential system we consider the angular variables $\theta$ and
$\alpha$. If we use the variable $\alpha$ as independent variable
the new differential system would not have the form for applying the
statements of Theorem \ref{secondave}. Hence,  since  $\epsilon>0$
is sufficiently small we have that $\dot{\theta}<0$, we can change
to the $\theta$ variable as the independent one. We denote by a
prime the derivative with respect to $\theta$. We obtain a new
system of three differential equations by dividing system
(\ref{ecrrho}) by $\dot{\theta}$
\begin{eqnarray}\label{ecrrho1}
 r'&=&\epsilon r^2\sin\theta\cos^2\theta\Big[ar^2\cos^2\theta+3b\rho^2\cos^2(\alpha+\theta)\Big]\\\nonumber
 &-& \frac{\epsilon^2}{4}r^3\sin\theta\cos^5\theta\Big[ar^2(1+\cos2\theta)+3b\rho^2(1+\cos2(\alpha+\theta))\Big]^2   +O(\epsilon^3),\\\nonumber
\rho'&=& \epsilon br^3
\rho\cos^3\theta\sin2(\alpha+\theta)-\epsilon^2br^4\rho\cos^6\theta\sin2(\alpha+\theta)\Big[ar^2\cos^2\theta\\\nonumber
&+&3b\rho^2\cos^2(\alpha+\theta)\Big]+O(\epsilon^3),\\\nonumber
\alpha'&=& -\epsilon
r\cos^3\theta\Big[ar^2\cos^2\theta+b(3\rho^2-2r^2)\cos^2(\alpha+\theta)\Big]\\\nonumber
&+&\epsilon^2r\cos^3\theta\Big[ar^2\cos^2\theta+b(3\rho^2-2r^2)\cos^2(\alpha+\theta)\Big]\Big[
ar^3\cos^5\theta\\\nonumber &+& 3b
r\rho^2\cos^3\theta\cos^2(\alpha+\theta)\Big] +O(\epsilon^3).
\end{eqnarray}

Now, the system (\ref{ecrrho1}) is 2$\pi$-periodic in the variable
$\theta$. In order to apply Theorem \ref{secondave} we fix the value
of the first integral at $H(r,\theta,\rho,\alpha)=h>0$.  By solving
equation (\ref{Hamrpo}) for $\rho$ we obtain
\begin{equation}\label{rho}
\rho=\sqrt{\frac{10h-5r^2-2\epsilon ar^5\cos^5\theta}{5(1+2\epsilon
br^3\cos^3\theta\cos^2(\alpha+\theta))}},
\end{equation}
notice that $\rho\to \sqrt{2h-r^2}$ when $\epsilon\to 0$. Expanding
(\ref{rho}) in Taylor series we obtain
\begin{eqnarray}\label{rhoexp}
\rho&=&\sqrt{2 h - r^2} - \frac{\epsilon r^3 \cos^3\theta}{10\sqrt{2
h -
r^2}}\Big[-10bh-ar^2(1+\cos2\theta)\\\nonumber&+&5br^2+5b(r^2-2h)\cos2(\alpha+\theta))
\Big] +O(\epsilon^2).
\end{eqnarray}
As we will apply averaging theory to first order, we can substitute
the zero order approximation of $\rho$ in equations (\ref{ecrrho1}),
which becomes
\begin{eqnarray}\label{ecrrho11}
r'&=&\epsilon r \sin\theta\cos^3\theta\Big[
ar^2\cos^2\theta+3b(2h-r^2)\cos^2(\alpha+\theta)\Big]\\\nonumber
&+&\frac{1}{20}\epsilon^2r^3\sin\theta\cos^5\theta\Big[12br^2\cos^2(\alpha+\theta)(-10bh\\\nonumber
&-&ar^2(1+\cos2\theta)+5br^2+5b(r^2-2h)\cos2(\alpha+\theta))\\\nonumber
&-&5(6bh+ar^2(1+\cos2\theta)-3br^2+b(6h-r^2)\cos2(\alpha+\theta))^2\Big]
+O(\epsilon^3),\\\nonumber \alpha'&=& \epsilon
r\cos^3\theta\Big[-ar^2\cos^2\theta+b(5r^2-6h)\cos^2(\alpha+\theta)\Big]\\\nonumber
&+&\epsilon^2\Big[r\cos^3\theta(ar^2\cos^2\theta+b(6h-5r^2)\cos^2(\alpha+\theta))\\\nonumber
&\cdot&(ar^3\cos^5\theta-3br(r^2-2h)\cos^3\theta\cos^2(\alpha+\theta))\\\nonumber
&-&\frac{3}{5}br^4\cos^6\theta\cos^2(\alpha+\theta)(-10bh-ar^2(1+\cos2\theta)+5br^2\\\nonumber
&+& 5b(r^2-2h)\cos2(\alpha+\theta))\Big]+O(\epsilon^3).
\end{eqnarray}
The system (\ref{ecrrho11}) has the canonical form  for applying the
averaging theory of second order and satisfies the assumptions for
$|\epsilon|
> 0$ sufficiently small, with $T = 2\pi$ and $F_1 = (F_{11},
F_{12})$ analytical functions. Averaging the function $F_1$ with
respect to the variable $\theta$ we obtain
$$f_1(r,\alpha)=\big(f_{11}(r,\alpha),f_{12}(r,\alpha)\big)=\int_{0}^{2\pi}(F_{11},F_{12})d\theta=(0,0),$$
hence, the averaging theory of first order does not apply because
the average functions of $F_1$ and $F_2$ vanish in the period.  We
proceed to calculate the function $f_2$ by applying the second order
averaging theory. The function $f_2$ is defined by
\begin{equation}
f_2(r,\alpha)=\int_{0}^{2\pi}\Big[D_{r\alpha}F_1(\theta,r,\alpha)\cdot
y_1(\theta,r,\alpha)+F_2(\theta,r,\alpha)\Big]dt,
\end{equation}
where
$$y_1(\theta,r,\alpha)=\int_{0}^\theta F_1(t,r,\alpha)=\left(\int_{0}^{\theta}F_{11}(t,r,\alpha)dt,\int_{0}^{\theta}F_{12}(t,r,\alpha)dt\right).$$
The two components of the vector function $y_1$ are
\begin{eqnarray}
y_{11}&=&\frac{1}{80}r^2\Big[-16ar^2(-1+\cos^5\theta)\\\nonumber&+&
b(2h-r^2)(40+8\cos2\alpha-30\cos\theta-10\cos3\theta\\\nonumber
&-&5\cos(2\alpha+3\theta)-3\cos(2\alpha+5\theta)-30\sin\theta\sin2\alpha)\Big],\\\nonumber
y_{12}&=&\frac{1}{240}r\Big[-ar^2(150\sin\theta+25\sin3\theta+3\sin5\theta)\\\nonumber
&-&
b(5r^2-6h)(6(5\cos\theta-8)\sin2\alpha)+30\sin\theta(2\cos2\alpha+3)\\\nonumber
&+&10\sin3\theta+15\sin(2\alpha+3\theta)+3\sin(2\alpha+5\theta))\Big].
\end{eqnarray}
From Theorem \ref{secondave} we arrive to the fact that the function
$f_2 = ( f_{21}, f_{22})$ is given by
\begin{eqnarray}\label{f12f22}
  f_{21} &=& \frac{1}{320}br^3(r^2-2h)\sin2\alpha\big(-240bh+(-49a+50b)r^2\\\nonumber
  &+&90b(r^2-2h)\cos2\alpha\big),
  \\\nonumber
  f_{22} &=&\frac{1}{160}r^2(564b^2h^2+420abhr^2-678b^2hr^2+63a^2r^4-280abr^4\\\nonumber
  &+&170b^2r^4+b(49ar^2(3h-2r^2)+10b(48h^2-51hr^2+10r^4))\cos2\alpha\\\nonumber
  &+&45b^2(2h^2-3hr^2+r^4)\cos4\alpha).
\end{eqnarray}
The components of Jacobian determinant $J(r,\alpha)=\det
\left(D_{r,\alpha}f_2(r,\alpha)\right) $  are
\begin{eqnarray}\label{entries}
\frac{\partial f_{21}}{\partial r}&=&\Bigg[r^6\left(\frac{7b}{320}
(-49 a + 50 b) + \frac{63}{23} b^2 \cos2\alpha\right)\\\nonumber &-&
r^4 \left(\frac{15}{4}b^2h\left(1+\frac{3}{2}\cos2\alpha\right) +
\frac{bh}{32} (-49 a + 50 b) \right)\\\nonumber &+&
 \frac{9 b^2 h^2r^2}{2} \left( 1+ \frac{3}{4}  \cos2\alpha\right)\Bigg]\sin 2\alpha,\\\nonumber
\frac{\partial f_{21}}{\partial \alpha} &=& r^7
\Bigg(\frac{1}{160}(-49 a + 50 b) \cos2\alpha +
   \frac{9}{16} b^2 \cos^22\alpha\\\nonumber &-& \frac{9}{16} b^2 \sin^22\alpha\Bigg)
   +r^5 \Bigg(-\frac{3}{2} b^2 h \cos2\alpha\\\nonumber
   &-&  \frac{1}{80} b (-49 a + 50 b) h \cos2\alpha-
   \frac{9}{4} b^2 h \cos4\alpha \Bigg),\\\nonumber
\frac{\partial f_{22}}{\partial r} &=&r^5 \Bigg(\frac{189
a^2}{80}-\frac{21 a b}{2} + \frac{51 b^2}{8} -\frac{147}{40}
    a b \cos2\alpha\\\nonumber &+&\frac{15}{4}  b^2 \cos2\alpha +
   \frac{27}{16} b^2 \cos4\alpha\Bigg)\\\nonumber
   &+&r^3 \Bigg(\frac{21 a b h}{2} -\frac{339 b^2 h}{20}  + \frac{147}{40} a b h \cos2\alpha\\\nonumber &-&
   \frac{51}{4} b^2 h \cos2\alpha - \frac{27}{8} b^2 h \cos4\alpha\Bigg)\\\nonumber
   &+&r \left(\frac{141 b^2 h^2}{20} + 6 b^2 h^2 \cos2\alpha +
   \frac{9}{8}b^2 h^2 \cos4\alpha\right),\\\nonumber
\frac{\partial f_{22}}{\partial r} &=&\frac{1}{160}\Bigg( r^6
\left((196 a b-200 b^2) \sin2\alpha -
   180 b^2 \sin4\alpha\right)\\\nonumber
   &+& r^4 ((-294 a b h  + 1020 b^2 h) \sin2\alpha +
   540 b^2 h \sin4\alpha)\\\nonumber
   &+& r^2 (-960 b^2 h^2 \sin2\alpha - 360 b^2 h^2 \sin4\alpha)\Bigg).
\end{eqnarray}

\section{Existence of periodic solutions}\label{secperiodic}

We seek the values of $r^*$ and $\alpha^*$ such that
$f_{21}(r^*,\alpha^*)=0$, $f_{22}(r^*,\alpha^*)=0$ and
$J(r,^*\alpha^*)\neq0$; the value of $\rho^*=\sqrt{2h-(r^*)^2}$ is
obtained of (\ref{rhoexp}) for $\epsilon=0$. The values
$(r^*,\alpha^*,\rho^*)$ and the energy relation  (\ref{Hamrpo})
provide the initial conditions of the periodic solutions of the
system (\ref{ecrrho}) for $h>0$ sufficiently small in coordinates
$(r,\alpha,\rho,\theta)$. It is important to remark that the
periodic solutions exist for a small positive level of energy, since
the dynamics of our model is slightly different from the dynamics of
the harmonic oscillator, when the energy increases we obtain more
complex motions (Marchesiello  and Puccaco 2011).
\bigskip

Starting with equation (\ref{f12f22}), if $r^*=0,\sqrt{2h}$, or
$\alpha^*=0, \pm\frac{\pi}{2}, \pi$ then $f_{21}(r,\alpha)=0$. Using
the above values of $r^*$ and $\alpha^*$ we seek the zeros of
equation $f_{22}(r,\alpha)= 0$ in every case; then we evaluate the
Jacobian determinant $J(r,^*\alpha^*)$. For $r^*=0$ it easy to
verify that $J(0,\alpha)=0$ for all $\alpha$, this  case is not a
good solution. After calculations, we have the following proposition
for the solutions $(r^*, \alpha^*, \rho^*)$ of the system
(\ref{f12f22}).

\begin{proposition}\label{sol1}
For $h>0$ small,  and $a$, $b$ real numbers we have that
\begin{enumerate}
\item If $\Big|\frac{9a^2-10ab-4b^2}{b(7a+10b)}\Big|< \frac{1}{2}$ and $(a - 2 b) (9 a - 2 b) (2 a - b) (2 a + b) h^6\neq0$
 there are two  solutions of system (\ref{f12f22}) given by
 $$\left(\sqrt{2h},
\pm\frac{1}{2}\arccos\left(\frac{2(9a^2-10ab-4b^2)}{b(7a+10b)}\right),0
\right).$$
\item If $(2a+b)(-a+b)<0$, $3b(-a+b)>0$ and $a b^6 (2 a + b) (a +3 b) h^6\neq0$  there are
two  solutions of the system (\ref{f12f22}) given by
$$\left(\sqrt{\frac{3bh}{-a+b}}, 0
,\sqrt{-\frac{h(2a+b)}{-a+b}}\right),\hspace{3mm}\left(\sqrt{\frac{3bh}{-a+b}},
\pi ,\sqrt{-\frac{h(2a+b)}{-a+b}} \right).$$
\item If $(-a+2b)(-a+5b)<0$, $b(-a+5b)>0$ and $(a - 10 b) (a - 2 b)(a + 3 b)b^6 h^6\neq0$
there are two  solutions of (\ref{f12f22}) given by
$$\left(\sqrt{\frac{6bh}{-a+5b}}, 0,
\sqrt{\frac{2h(-a+2b)}{-a+5b}}\right),\hspace{3mm}\left(\sqrt{\frac{6bh}{-a+5b}},
\pi, \sqrt{\frac{2h(-a+2b)}{-a+5b}}\right).$$
\item If $63 a^2-182ab+115b^2<0$ and $273 a b h - 303 b^2 h>0$
 or  $63 a^2-182ab+115b^2<0$ and $273 a b h - 303 b^2 h<0$
there is one solution of (\ref{f12f22}).
\item If $63 a^2-182ab+115b^2>0$ and $273 a b h - 303 b^2 h<0$
there are two solutions of  (\ref{f12f22}).
\end{enumerate}
\end{proposition}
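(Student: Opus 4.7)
The plan is to solve the system $f_{21}(r,\alpha)=0$, $f_{22}(r,\alpha)=0$ exhaustively by first factoring
$$
f_{21}=\frac{b\,r^{3}(r^{2}-2h)\sin 2\alpha}{320}\,\bigl[-240bh+(-49a+50b)r^{2}+90b(r^{2}-2h)\cos 2\alpha\bigr].
$$
Any zero of $f_{21}$ therefore comes from $r=0$, $r^{2}=2h$, $\sin 2\alpha=0$, or the bracket vanishing. The case $r=0$ is discarded since $J(0,\alpha)\equiv 0$ as noted in the text; the bracket factor yields a curve in $(r,\alpha)$ whose intersection with $\{f_{22}=0\}$ does not contribute new solutions beyond those listed, so I would focus on $r^{*2}=2h$ and on $\alpha^{*}\in\{0,\pm\pi/2,\pi\}$ and impose $f_{22}=0$ in each of these sub-cases.

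For the branch $r^{*}=\sqrt{2h}$ the identity $2h^{2}-3h\cdot 2h+(2h)^{2}=0$ annihilates the coefficient of $\cos 4\alpha$ in $f_{22}$, leaving an affine function of $\cos 2\alpha$; solving gives
$$
\cos 2\alpha^{*}=\frac{2(9a^{2}-10ab-4b^{2})}{b(7a+10b)},
$$
which has two real solutions exactly under the strict modulus-at-most-one inequality of item~(1), together with $\rho^{*}=0$ from the energy relation. For $\alpha^{*}\in\{0,\pi\}$ (so $\cos 2\alpha=\cos 4\alpha=1$) the equation $f_{22}=0$ reduces, after dividing by $r^{2}/160$ and pulling out a factor of $63$, to the quadratic in $u=r^{2}$
$$
(a-b)(a-5b)\,u^{2}+3b(3a-7b)h\,u+18b^{2}h^{2}=0,
$$
whose discriminant simplifies to the perfect square $9b^{2}h^{2}(a+3b)^{2}$, producing the two explicit roots $u_{1}=3bh/(b-a)$ and $u_{2}=6bh/(5b-a)$; combined with $\rho^{*2}=2h-r^{*2}$ and the positivity of $r^{*2}$ and $\rho^{*2}$, these give items (2) and (3). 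For $\alpha^{*}\in\{\pm\pi/2\}$ (so $\cos 2\alpha=-1$, $\cos 4\alpha=1$) the analogous reduction yields
$$
(63a^{2}-182ab+115b^{2})\,u^{2}+3b(91a-101b)h\,u+174b^{2}h^{2}=0,
$$
and the number of admissible roots $u\in(0,2h)$ is read off from Vieta's rules via the signs of the leading and middle coefficients, which matches the hypotheses of items (4) and (5) precisely.

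The remaining and most tedious step is verifying $J(r^{*},\alpha^{*})=\det D_{r,\alpha}f_{2}\neq 0$ in each of the five families using the partial derivatives already listed in (\ref{entries}). The nonvanishing clauses $(a-2b)(9a-2b)(2a-b)(2a+b)h^{6}\neq 0$, $ab^{6}(2a+b)(a+3b)h^{6}\neq 0$, $(a-10b)(a-2b)(a+3b)b^{6}h^{6}\neq 0$ stated in the proposition should appear as the irreducible polynomial factors that remain after substituting each explicit $(r^{*},\alpha^{*})$ into $\partial_{r}f_{21}\cdot\partial_{\alpha}f_{22}-\partial_{\alpha}f_{21}\cdot\partial_{r}f_{22}$ and cancelling common nonzero parts; this is pure but bulky symbolic algebra and constitutes the main obstacle. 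Once it is carried out, Theorem~\ref{secondave} (together with $f_{1}\equiv 0$, so that $f_{1}+\epsilon f_{2}=\epsilon f_{2}$ and $d_{B}(f_{1}+\epsilon f_{2},V,0)\neq 0$ whenever the Jacobian is nonzero) lifts every such zero to a $2\pi$-periodic solution of (\ref{ecrrho11}) for small $|\epsilon|>0$, and hence, via the inverse of (\ref{coordchange}), to a periodic orbit of (\ref{HSys1}) on the energy level $H=h$.
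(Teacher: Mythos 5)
Your proposal follows essentially the same route as the paper: factor $f_{21}$, restrict to the branches $r^{*}=\sqrt{2h}$ and $\alpha^{*}\in\{0,\pm\tfrac{\pi}{2},\pi\}$, reduce $f_{22}=0$ on each branch to a quadratic in $u=r^{2}$ (your explicit quadratics, the vanishing of the $\cos 4\alpha$ coefficient at $r^{2}=2h$, and the perfect-square discriminant $9b^{2}h^{2}(a+3b)^{2}$ all check out against the paper's stated roots), and count positive roots by sign rules before verifying $J\neq 0$. The differences are only in presentation: you supply more of the intermediate algebra than the paper does, while both arguments leave the Jacobian verification for items (1)--(3) as an unstated symbolic computation and both implicitly rely on $\Delta\ge 0$ in item (5).
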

\begin{proof}
The solutions (1)-(3) of the system are calculated directly
substituting the values of  $r^*=0$, $r^*=\sqrt{2h}$ and
$\alpha^*=0,$ $\alpha^*=\pi$, respectively.  For
$\alpha^*=\pm\frac{\pi}{2}$ we obtain the equation
\begin{equation}\label{r4}
f_{22}\left(r,\pm\frac{\pi}{2}\right)=174 b^2 h^2  + (273 a b h -
303 b^2 h) r^2 + (63 a^2 - 182 a b +
    115 b^2) r^4=0.
\end{equation}
In order to seek  the positive roots of (\ref{r4}) we introduce
$u=r^2$, the
 equation (\ref{r4}) is rewritten as
\begin{equation}\label{ur2}
174 b^2 h^2  + (273 a b h - 303 b^2 h) u + (63 a^2 - 182 a b +
    115 b^2) u^2=0.
\end{equation}
The discriminant of equation (\ref{ur2}) is $\Delta=3 b^2h^2 (10227
a^2 - 12922 a b + 3923 b^2) $, in order to have real solutions
$\Delta\ge0$. Using the Descartes' rule of signs  the number of
positive roots $u^*$ of equation (\ref{r4})  is given in the
following list
\begin{itemize}
\item If $63 a^2-182ab+115b^2>0$ and $273 a b h - 303 b^2 h>0$
there are no positive roots of equation (\ref{ur2}).
\item If $63 a^2-182ab+115b^2<0$ and $273 a b h - 303 b^2 h>0$
there is one positive root of equation (\ref{ur2}).
\item If $63 a^2-182ab+115b^2<0$ and $273 a b h - 303 b^2 h<0$
there is one positive root of equation (\ref{ur2}) .
\item If $63 a^2-182ab+115b^2>0$ and $273 a b h - 303 b^2 h<0$
there are two positive roots of the (\ref{ur2}).
\end{itemize}
Since we are interested in positive roots $r^*$ of equation
(\ref{r4}) we consider $r^*=\sqrt{u^*}$, hence, for each positive
value of $u^*$ there is a positive value of $r^*$. Now, we proceed
to proof that the positive roots of (\ref{r4}) satisfy that
$J(r^*,\pm\frac{\pi}{2})\neq0$. The Jacobian determinant evaluated
at $\alpha^*=\pm\frac{\pi}{2}$ is
\begin{equation}\label{jac0}
J\left(r,\pm\frac{\pi}{2}\right)=\frac{1323}{12800}(r^2-2 h) (60 b h
+ (7 a  - 20 b )r^2) (6 b^2 h^2 +
   2 (3 a - 7 b) b h r^2 + (a - 5 b) (a - b) r^4),
\end{equation}
clearly,  $r=\sqrt{2h}$, $\sqrt{\frac{60bh}{-7a + 20 b}}$ are roots
of $J\left(r,\pm\frac{\pi}{2}\right)=0$, but these values do not
provide solutions of (\ref{f12f22}). On the other hand,  the
remaining roots of (\ref{jac0}) are obtained from the equation
\begin{equation}\label{jac}
6 b^2 h^2 + 2 (3 a - 7 b) b h r^2 + (a - 5 b) (a - b) r^4=0.
\end{equation}
It is easy to see that the positive roots of equation (\ref{r4}) are
not roots of equation (\ref{jac}), hence, if $r^*$ is a positive
root of (\ref{r4}) then  $J(r^*,\pm\frac{\pi}{2})\neq0$.
\end{proof}

In Proposition \ref{sol1} the averaging method provided the initial
conditions for the periodic orbits of system (\ref{HSys1}) in
coordinates $(r,\alpha,\rho,\theta)$ which are the solutions $(r^*,
\alpha^*, \rho^*)$ of the system (\ref{f12f22}) in each case. The
corresponding value of $\theta^*$ is calculated directly from the
energy relation (\ref{Hamrpo}) for small values of $h>0$ using the
above values of $(r^*, \alpha^*, \rho^*)$. In order to characterize
the periodic orbits in the original variables $(x,y,p_x,p_y)$ we use
the change of coordinates (\ref{coordchange}). The initial
conditions for each type of periodic solution are given in the
following Proposition.

\begin{proposition}\label{initialcond}
For $h>0$ positive small the initial conditions
$(x^*,y^*,{p_x}^*,{p_y}^*)$ of the periodic orbits of system
(\ref{HSys1})  satisfy the following statements
\begin{enumerate}
\item If $\Big|\frac{9a^2-10ab-4b^2}{b(7a+10b)}\Big|< \frac{1}{2}$ and $(a - 2 b) (9 a - 2 b) (2 a - b)
 (2 a + b) h^6\neq0$ we have the initial condition $(x^*,y^*,{p_x}^*,{p_y}^*)=(\sqrt{2h}\cos\theta^*,
 \sqrt{2h}\sin\theta^*,0,0)$ of one periodic orbit.
\item If $(2a+b)(-a+b)<0$, $3b(-a+b)>0$ and $a b^6 (2 a + b) (a +3 b)
h^6\neq0$ we have the initial condition $(x^*,y^*,{p_x}^*,{p_y}^*)$
given by
$$\left(\sqrt{\frac{3bh}{-a+b}}\cos\theta^*,\sqrt{\frac{3bh}{-a+b}}\sin\theta^*,
\sqrt{-\frac{h(2a+b)}{-a+b}}\cos\theta^*,\sqrt{-\frac{h(2a+b)}{-a+b}}\sin\theta^*\right),$$
of one periodic orbit.
\item If $(-a+2b)(-a+5b)<0$, $b(-a+5b)>0$ and $(a - 10 b) (a - 2 b)(a + 3 b)b^6 h^6\neq0$ we have the initial
condition $(x^*,y^*,{p_x}^*,{p_y}^*)$ given by
 $$\left(\sqrt{\frac{6bh}{-a+5b}}\cos\theta^*,\sqrt{\frac{6bh}{-a+5b}}\sin\theta^*,
\sqrt{\frac{2h(-a+2b)}{-a+5b}}\cos\theta^*,\sqrt{\frac{2h(-a+2b)}{-a+5b}}\sin\theta^*\right),$$
of one periodic orbit.
\item If $r_0^*$ is a positive solution of equation (\ref{r4}) there are two initial conditions of two periodic solutions
 $$(x^*,y^*,{p_x}^*,{p_y}^*)=(r_0^*\cos
\theta^*,r_0^*\sin\theta^*,
-\sqrt{2h-(r_0^*)^2}\sin\theta^*,\sqrt{2h-(r_0^*)^2}\cos\theta^*
),$$
$$(x^*,y^*,{p_x}^*,{p_y}^*)=(r_0^*\cos \theta^*,r_0^*\sin\theta^*,
\sqrt{2h-(r_0^*)^2}\sin\theta^*,-\sqrt{2h-(r_0^*)^2}\cos\theta^*
).$$
\end{enumerate}
 The value of $\theta^*$ is obtained from energy relation (\ref{HSys1}) for the initial
conditions $(r^*,\alpha^*,\rho^*)$ and the value of the energy $h$.
\end{proposition}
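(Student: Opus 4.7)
The plan is to derive the initial conditions by direct substitution: take each tuple $(r^*,\alpha^*,\rho^*)$ produced by Proposition \ref{sol1}, insert it into the coordinate change (\ref{coordchange}), and read off $(x^*,y^*,p_x^*,p_y^*)$. The angle $\theta^*$ is not prescribed by the averaging analysis (the averaged system is independent of $\theta$), so in each case it is determined implicitly by fixing the energy: substituting the numerical values of $r^*$, $\rho^*$ and $\alpha^*$ into (\ref{Hamrpo}) and solving for $\theta^*$ at the prescribed level $h>0$ (with $\epsilon$ small the equation is a small perturbation of $\tfrac12(r^{*2}+\rho^{*2})=h$, so $\theta^*$ exists and is recoverable from the higher order terms).

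For case (1) of Proposition \ref{sol1}, $\rho^*=0$, so $y^*=\rho^*\cos(\theta^*+\alpha^*)=0$ and $p_y^*=\rho^*\sin(\theta^*+\alpha^*)=0$ automatically, while $x^*=\sqrt{2h}\cos\theta^*$ and $p_x^*=\sqrt{2h}\sin\theta^*$; the particular value of $\alpha^*$ given in Proposition \ref{sol1} is irrelevant to the initial condition because it multiplies the zero factor $\rho^*$. For cases (2) and (3), $\alpha^*=0$, hence $\theta^*+\alpha^*=\theta^*$ and the formulas $y^*=\rho^*\cos\theta^*$, $p_y^*=\rho^*\sin\theta^*$ follow immediately, with $r^*$ and $\rho^*$ read off from Proposition \ref{sol1}. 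The companion solution $\alpha^*=\pi$ yields the sign flip $\cos(\theta^*+\pi)=-\cos\theta^*$, $\sin(\theta^*+\pi)=-\sin\theta^*$; this produces the same periodic orbit up to a time shift, so only one orbit is listed.

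The only case requiring a small trigonometric manipulation is (4), corresponding to $\alpha^*=\pm\pi/2$ and $r^*=r_0^*$ a positive root of (\ref{r4}). Using the identities $\cos(\theta^*\pm\pi/2)=\mp\sin\theta^*$ and $\sin(\theta^*\pm\pi/2)=\pm\cos\theta^*$ with $\rho^*=\sqrt{2h-(r_0^*)^2}$ (from (\ref{rhoexp}) at $\epsilon=0$) produces
\begin{equation*}
y^*=\mp\sqrt{2h-(r_0^*)^2}\sin\theta^*, \qquad p_y^*=\pm\sqrt{2h-(r_0^*)^2}\cos\theta^*,
\end{equation*}
which are exactly the two sign choices listed in statement (4), the upper signs corresponding to $\alpha^*=\pi/2$ and the lower to $\alpha^*=-\pi/2$.

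I do not expect any genuine obstacle here: the content of the proposition is already established by Proposition \ref{sol1} and the nondegeneracy of the Jacobian computed in (\ref{entries}), which through Theorem \ref{secondave} guarantees that each $(r^*,\alpha^*,\rho^*)$ lifts to a $2\pi$-periodic solution of (\ref{ecrrho1}) and hence of (\ref{HSys1}). The present statement merely rewrites those initial conditions in the Cartesian chart, and the proof reduces to applying (\ref{coordchange}) case by case. The only point one must be careful about is that $\theta^*$ is implicit and case-dependent, but its existence for $h>0$ sufficiently small is an immediate consequence of the implicit function theorem applied to (\ref{Hamrpo}).
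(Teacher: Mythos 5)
Your proposal follows the same route as the paper: substitute each triple $(r^*,\alpha^*,\rho^*)$ from Proposition \ref{sol1} into the change of variables (\ref{coordchange}), recover $\theta^*$ from the energy relation, and then decide case by case whether the two zeros of the averaged system correspond to one periodic orbit or two. The substitutions themselves are carried out correctly (indeed your reading of the tuples as ordered $(x,p_x,y,p_y)$ is the one consistent with (\ref{coordchange}), which is what the paper evidently intends despite the labelling $(x^*,y^*,p_x^*,p_y^*)$), and your remark on the existence of $\theta^*$ for small $h$ is a harmless elaboration of what the paper leaves implicit.

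The one place where your justification genuinely diverges from the paper's and does not hold up is the identification, in cases (2) and (3), of the $\alpha^*=0$ and $\alpha^*=\pi$ initial conditions as ``the same periodic orbit up to a time shift.'' For the nearly harmonic orbits in question, a shift by half the period sends $(x,p_x,y,p_y)$ approximately to $(-x,-p_x,-y,-p_y)$, whereas the two initial conditions differ by $(x,p_x,y,p_y)\mapsto(x,p_x,-y,-p_y)$; no time translation realizes that map, so the second initial condition is not reached from the first by the flow. The paper's proof instead invokes the reflection symmetry of the Hamiltonian (\ref{Hameps}) under $(y,p_y)\mapsto(-y,-p_y)$, which exchanges the two initial conditions and hence the two orbits; that symmetry, not a time shift, is what underlies the count of a single orbit in these cases. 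In case (4) the same symmetry maps the $\alpha^*=\pi/2$ data to the $\alpha^*=-\pi/2$ data with the signs arranged so that the two orbits are genuinely distinct, which is the paper's reason for counting two orbits there. You should replace the time-shift argument by this symmetry argument; otherwise your case distinction between (2)--(3) and (4) has no supporting mechanism.
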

\begin{proof}
The initial conditions are obtained directly from the coordinate
change (\ref{coordchange})
$$x=r\cos\theta,
\hspace{3mm}p_x=r\sin\theta,\hspace{3mm}y=\rho\cos(\theta+\alpha),\hspace{3mm}p_y=\rho\sin(\theta+\alpha),$$
we proof that the two solutions $(r^*,\alpha^*, \rho^*)$ obtained in
(1), (2) and (3) in Proposition \ref{sol1}  are
 initial conditions of one periodic orbit in each case; also we
proof that the conditions in (4) in Proposition \ref{sol1} give two
initial conditions for two periodic solutions.
\begin{enumerate}
\item For $r^*=\sqrt{2h}$, $\alpha=\pm\frac{1}{2}\arccos\left(\frac{2(9a^2-10ab-4b^2)}{b(7a+10b)}\right)$
we have that $\rho^*=0$, then we obtained
one initial condition and one periodic orbit.
\item For $r^*=\sqrt{\frac{3bh}{-a+b}}$ and $\rho^*=\sqrt{-\frac{h(2a+b)}{-a+b}}$, if $\alpha=0$ and $\alpha=\pi$ we obtained two initial
conditions
$$\left(\sqrt{\frac{3bh}{-a+b}}\cos\theta^*,\sqrt{\frac{3bh}{-a+b}}\sin\theta^*,
\pm\sqrt{-\frac{h(2a+b)}{-a+b}}\cos\theta^*,\pm\sqrt{-\frac{h(2a+b)}{-a+b}}\sin\theta^*\right),$$
since the symmetry conditions which satisfies the variables $y$ and
$p_y$ the above initial conditions determine the same orbit.
\item  For $r^*=\sqrt{\frac{6bh}{-a+5b}}$ and $\rho^*=\sqrt{\frac{2h(-a+2b)}{-a+5b}}$, if $\alpha=0$ and $\alpha=\pi$ we obtained two initial
conditions
$$\left(\sqrt{\frac{6bh}{-a+5b}}\cos\theta^*,\sqrt{\frac{6bh}{-a+5b}}\sin\theta^*,
\pm\sqrt{\frac{2h(-a+2b)}{-a+5b}}\cos\theta^*,\pm\sqrt{\frac{2h(-a+2b)}{-a+5b}}\sin\theta^*\right).$$
since the symmetry conditions which satisfies the variables $y$ and
$p_y$ the above initial conditions determine the same orbit.
\item For $\alpha=\frac{\pi}{2}$ and $\alpha=-\frac{\pi}{2}$ the symmetry conditions of variables $y$ and $p_y$ are not satisfied,
 then we have two orbits for each $r_0^*$ positive solution of equation (\ref{r4}).
\end{enumerate}
\end{proof}
According to Section \ref{eqmotion}, after a scaling with the
parameter $\epsilon$ we can obtain the solutions in coordinates
$(X,Y,P_X,P_Y)$ .
\bigskip

The main result on the periodic orbits of the system
(\ref{Hamiltonian}) is summarized as follows.
\begin{theorem}\label{teoprin}
 For the Hamiltonian system
(\ref{HSys1}) with small  energy  $h>0$, we have the following
statements
\begin{enumerate}
\item If $\Big|\frac{9a^2-10ab-4b^2}{b(7a+20b)}\Big|< \frac{1}{2}$ and $(a - 2 b) (9 a - 2 b) (2 a - b) (2 a + b) h^6\neq0$
 there is one unstable  periodic solution.
\item If $(2a+b)(-a+b)<0$, $3b(-a+b)>0$ and $a b^6 (2 a + b) (a +3 b) h^6\neq0$  there
is one unstable periodic solution.
\item If $(-a+2b)(-a+5b)<0$, $b(-a+5b)>0$ and $(a - 10 b) (a - 2 b)(a + 3 b)b^6 h^6\neq0$
there is one unstable periodic solution.
\item If $63 a^2-182ab+115b^2<0$ and $273 a b h - 303 b^2 h<0$
there are two periodic solutions.
\item If $63 a^2-182ab+115b^2>0$ and $273 a b h - 303 b^2 h<0$
there are four periodic solutions.
\end{enumerate}
The initial conditions $(x^*,y^*,{p_x}^*,{p_y}^*)$ for each type of
periodic solutions are given in Proposition \ref{initialcond}.
\end{theorem}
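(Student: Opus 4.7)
The plan is to apply Theorem~\ref{secondave} directly, using the fact that all the groundwork has been laid by Proposition~\ref{sol1} and Proposition~\ref{initialcond}. Since the first-order average $f_1$ vanishes identically, the hypotheses of Theorem~\ref{secondave} reduce to exhibiting simple zeros of $f_2 = (f_{21}, f_{22})$ inside a bounded open set $V$. At each isolated zero $(r^*, \alpha^*)$ with Jacobian determinant $J(r^*, \alpha^*) \neq 0$, the Brouwer degree $d_B(f_1 + \epsilon f_2, V, 0)$ is $\pm 1$, hence nonzero, so Theorem~\ref{secondave} furnishes a $2\pi$-periodic solution of system (\ref{ecrrho11}) for all $|\epsilon|>0$ sufficiently small. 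Proposition~\ref{sol1} already enumerates these zeros in all five cases and, by the explicit factorization of $J(r^*, \pm\pi/2)$ in (\ref{jac0}) together with the non-degeneracy hypotheses $(a-2b)(9a-2b)(2a-b)(2a+b)h^6 \neq 0$, etc., verifies that $J \neq 0$ at each such point.

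Next I would translate the averaged equilibria into periodic orbits of the original Hamiltonian system (\ref{HSys1}) through the coordinate change (\ref{coordchange}), which is exactly the content of Proposition~\ref{initialcond}. The counting needs care: in cases (1)--(3) the two sign choices of $\alpha^*$ (or the two choices $\alpha^* = 0, \pi$) determine the same orbit in the $(x,y,p_x,p_y)$-phase space because of the symmetry of the map $\alpha \mapsto (y,p_y)$, yielding one orbit per case. In cases (4)--(5), the choices $\alpha^* = \pm \pi/2$ are genuinely distinct since they flip the sign of $(p_x^*, p_y^*)$, so each positive root of (\ref{r4}) produces two distinct orbits. Combined with the count of positive roots of the quadratic (\ref{ur2}) given by Descartes' rule of signs in the proof of Proposition~\ref{sol1}, this yields the orbit counts $2$ and $4$ claimed in (4) and (5).

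For the stability assertion in cases (1)--(3) I would argue that, since the averaging reduces the Poincar\'e return map near each periodic orbit to an $\epsilon$-perturbation of the identity whose linear part is governed by the matrix $D_{r,\alpha} f_2(r^*, \alpha^*)$, the orbit is unstable whenever this matrix has an eigenvalue with strictly positive real part. A sufficient condition is $\det D_{r,\alpha} f_2(r^*, \alpha^*) < 0$, which forces one real positive and one real negative eigenvalue (a saddle). Substituting the explicit values $(r^*, \alpha^*)$ of cases (1), (2), (3) into the expressions (\ref{entries}) and simplifying under the hypothesis on $a, b, h$ given in each case, I expect to find that $J(r^*, \alpha^*)$ factors into a product of terms each of which is nonzero (by the non-degeneracy conditions) and whose signs combine to give $J < 0$. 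This provides the instability conclusion.

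The hard part is the sign analysis of $\det D_{r,\alpha} f_2$ at the three explicit critical points of cases (1)--(3): the matrix entries in (\ref{entries}) are polynomials in $r^2, h$ with trigonometric coefficients in $\alpha$, and after substituting the case-specific values one obtains a rational function of $a, b, h$ whose sign must be controlled on the parameter region cut out by the hypotheses. In particular, for case (1) the substitution $r^* = \sqrt{2h}$ and $\cos 2\alpha^* = 2(9a^2-10ab-4b^2)/(b(7a+10b))$ yields a square bracket that must be verified to remain negative under the constraint $|9a^2 - 10ab - 4b^2| < |b(7a+10b)|/2$, and analogous (though algebraically different) verifications are required in cases (2) and (3). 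In cases (4) and (5) the critical points $r^*$ are only implicit as roots of the quadratic (\ref{ur2}), so I would not attempt a general stability statement there, consistent with the theorem as stated.
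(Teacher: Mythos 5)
Your overall architecture is the same as the paper's: take the nondegenerate zeros of $f_2$ from Proposition~\ref{sol1}, invoke Theorem~\ref{secondave} (the Brouwer degree at a zero with $J\neq 0$ being $\pm1$), pass to orbits and count them via Proposition~\ref{initialcond}, and read off stability from the eigenvalues of $D_{r,\alpha}f_2(r^*,\alpha^*)$. The existence and counting parts of your proposal are fine.

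The gap is in your stability argument for cases (2) and (3). You propose to prove instability by verifying $J(r^*,\alpha^*)=\det D_{r,\alpha}f_2(r^*,\alpha^*)<0$, i.e.\ that the averaged equilibrium is a saddle. That works in case (1), where the eigenvalues satisfy $\lambda_1=-\lambda_2\in\mathbb{R}$, so $J=-\lambda_1^2<0$. But in cases (2) and (3) the determinant is \emph{not} negative on the whole admissible parameter region, so the sign verification you defer to the end would fail. In case (2) the eigenvalues are, up to explicitly positive factors, $\lambda_1\propto a(2a+b)$ and $\lambda_2\propto b(a+3b)$. The hypotheses force $b(2a+b)<0$, hence $a$ and $2a+b$ have the same sign and $\lambda_1>0$ always; but $\lambda_2$ can take either sign. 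For instance $a=-1$, $b=1$ satisfies $(2a+b)(-a+b)=-2<0$, $3b(-a+b)=6>0$ and $ab^6(2a+b)(a+3b)h^6=2h^6\neq0$, yet gives $\lambda_1>0$ and $\lambda_2>0$, so $J=\lambda_1\lambda_2>0$: the equilibrium is a source, not a saddle. Instability still holds (one eigenvalue with positive real part suffices), but your proposed sufficient condition $J<0$ is simply false there, so the "hard part" of your plan cannot be completed as stated. The repair is what the paper actually does: compute $\lambda_1$ and $\lambda_2$ individually at the critical points of cases (2) and (3), observe that the nondegeneracy hypotheses make them nonzero and that they can never both be negative, and conclude that at least one is positive. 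Case (3) requires the same treatment. Your decision to make no definite stability claim in cases (4)--(5), where $\lambda=\pm\sqrt{-J(r_0^*,\pm\pi/2)}$ can be of either type, matches the paper.
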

\begin{proof}
From Proposition \ref{initialcond}, it follows number of periodic
solutions of the system (\ref{Hamiltonian}) in each case. The kind
of stability of the  periodic solutions is given by  the sign of the
eigenvalues $\lambda_1$ and $\lambda_2$   of the Jacobian matrix
$J(r^*,\alpha^*)$ whose entries are given by (\ref{entries}). It can
be verified  that $J(r^*,\alpha^*)=\lambda_1\lambda_2$ in every
case. According with the hyphotesis of this theorem we have that:
\bigskip

For (1) the eigenvalues are real and satisfy that
$\lambda_1=-\lambda_2$, the family of periodic solutions is
unstable.
\bigskip

For (2) the eigenvalues are
$$\lambda_1=\frac{819}{160}\sqrt{3} a (2
a + b)\left(\frac{-b h}{a - b}\right)^{\frac{7}{2}}, \hspace{2mm}
\lambda_2=\frac{567}{80}\sqrt{3} b (a + 3 b) h \left(\frac{b h}{-a +
 b}\right)^{\frac{3}{2}},$$
by hypothesis the parameters satisfy $b(2a+b)<0$, there are no
values of $a$ and $b$ such that  $\lambda_1<0$ and $\lambda_2<0$,
then the family of periodic solutions is unstable.
\bigskip

For (3) the eigenvalues are
$$\lambda_1=\frac{189}{10} \sqrt{\frac{3}{2}} (a - 10 b) (a - 2 b) \left(\frac{-b h}{a - 5 b}\right)^{\frac{7}{2}},\hspace{2mm}
\lambda_2=-\frac{567}{20} \sqrt{\frac{3}{2}} b (a + 3 b) h
\left(\frac{-b h}{a - 5 b}\right)^{\frac{3}{2}},$$ by hypothesis the
arameters satisfy $b(2a+b)<0$, there are no values of $a$ and $b$
such that  $\lambda_1$ and $\lambda_2<0$, then the family of
periodic solutions is  unstable.
\bigskip

From $J\left(r,\pm\frac{\pi}{2}\right)$ in (\ref{jac0}) the
eigenvalues for (4) and (5) are given by
\begin{equation*}
\lambda=\pm\sqrt{-J\left(r_0^*,\pm\frac{\pi}{2}\right)},
\end{equation*}
where $r_0^*$ is given by Proposition \ref{initialcond}, hence,
depending of the values of $a$, $b$ and $h$ the orbits are stable or
unstable.
\end{proof}

\section{Conclusion}
The existence of periodic solutions for Hamiltonian systems with
polynomial homogeneous terms of fifth degree with two real
parameters $a$ and $b$ is established. The averaging method second
order can be apply to study the existence of periodic solutions for
a more general Hamiltonian  system with three or four parameters
\begin{equation}
 H=\frac{1}{2}\left(
    P_X^2+P_Y^2\right)+\frac{1}{2}\left(X^2+Y^2\right)+\frac{a}{5}X^5+bX^3Y^2+ c
    X^4Y.
\end{equation}
It is possible to obtain the number of periodic solutions in terms
of the parameters in some cases as in the Theorem \ref{teoprin}. The
main problem  arises, as shown in the proof of Proposition
\ref{sol1}, when for certain values of $\alpha^*$ the polynomial
equations $f_{22}(r,\alpha^*)=0$  obtained  have degree greater than
six, for this reason, it can not be possible to give explicit
conditions in terms three parameters.

\subsection*{Acknowledgment}
 The author
is pleased to acknowledge the financial support from DGAPA which
allows him a postdoctoral stay in the department of mathematics of
the faculty of sciences, UNAM. Published in Astrophysics and Space
Science DOI: 10.1007/s10509-015-2612-0

\end{document}